\DeclareMathOperator{\tr}{tr}
\DeclareMathOperator{\wt}{wt}
\newcommand{\E}{\mathcal{E}}
\newcommand{\F}{\mathbb{F}}
\newcommand{\X}{{X}}
\newcommand{\Y}{{Y}}
\newcommand{\Z}{{Z}}
\newcommand{\one}{\mathbf{1}}
\newcommand{\bch}{\mathcal{BCH}}
\newcommand{\eg}{{\rm {EG}}}
\newcommand{\Rm}{{\rm {GRM}}}
\newcommand{\nix}[1]{}
\newtheorem{theorem}{Theorem}
\newtheorem{corollary}[theorem]{Corollary}
\newtheorem{lemma}[theorem]{Lemma}
\newtheorem{example}[theorem]{Example}
\begin{document}

% paper title
\title{Asymmetric Quantum LDPC Codes}

% author names and affiliations
% use a multiple column layout for up to three different
% affiliations
\author{
\authorblockN{Pradeep Kiran Sarvepalli}
\authorblockA{Department of Computer Science\\
Texas A\&M University \\
College Station, TX 77843\\
Email: pradeep@cs.tamu.edu}
\and
\authorblockN{Martin R\"{o}tteler}
\authorblockA{NEC Laboratories America \\
4, Independence Way, Suite 200 \\
Princeton, NJ 08540 \\
Email: mroetteler@nec-labs.com}
\and
\authorblockN{Andreas Klappenecker}
\authorblockA{Department of Computer Science\\
Texas A\&M University \\
College Station, TX 77843\\
Email: klappi@cs.tamu.edu}
}

% make the title area
\maketitle

\begin{abstract}
  Recently, quantum error-correcting codes were proposed that
  capitalize on the fact that many physical error models lead to a
  significant asymmetry between the probabilities for bit flip and
  phase flip errors. An example for a channel which exhibits such
  asymmetry is the combined amplitude damping and dephasing channel,
  where the probabilities of bit flips and phase flips can be related
  to relaxation and dephasing time, respectively. We give systematic
  constructions of asymmetric quantum stabilizer codes that exploit
  this asymmetry. Our approach is based on a CSS construction that
  combines BCH and finite geometry LDPC codes.
\end{abstract}

\section{Introduction}

In many quantum mechanical systems the mechanisms for the occurrence
of bit flip and phase flip errors are quite different. In a recent
paper Ioffe and M\'{e}zard \cite{ioffe07} postulated that quantum
error-correction should take into account this asymmetry. The
main argument given in \cite{ioffe07} is that most of the known
quantum computing devices have relaxation times ($T_1$) that are
around $1-2$ orders of magnitude larger than the corresponding
dephasing times $(T_2)$. In general, relaxation leads to both bit flip
and phase flip errors, whereas dephasing only leads to phase flip
errors.  This large asymmetry between $T_1$ and $T_2$ suggests that
bit flip errors occur less frequently than phase flip errors and a
well designed quantum code would exploit this asymmetry of errors to
provide better performance.  In fact, this observation and its
consequences for quantum error correction, especially quantum fault
tolerance, have prompted investigations from various other researchers
\cite{aliferis07,evans07,stephens07}.

Our goal will be as in \cite{ioffe07} to construct asymmetric quantum
codes for quantum memories and at present we do not consider the issue of
fault tolerance. We first quantitatively justify how noise processes,
characterized in terms of $T_1$ and $T_2$, lead to an asymmetry in the
bit flip and phase flip errors. As a concrete illustration of this we
consider the amplitude damping and dephasing channel. For this
channel we can compute the probabilities of bit flip and phase flips
in closed form. In particular, by giving explicit expressions for the
ratio of these probabilities in terms of the ratio $T_1/T_2$, we show
how the channel asymmetry arises.

After providing the necessary background,
we give two systematic constructions of asymmetric quantum codes
based on BCH and LDPC codes, as an alternative to the randomized 
construction of \cite{ioffe07}. 

\section{Background}
\noindent
Recall that a quantum channel that maps a state $\rho$ to 
\begin{eqnarray}
(1-p_x-p_y-p_z)\rho+ p_x\X\rho\X+p_y\Y\rho\Y + p_z\Z\rho\Z,\label{pauli-channel}
\end{eqnarray}
with 
$  \one= \left[\begin{smallmatrix} 1&0\\0&1\end{smallmatrix}\right]$,
$\X= \left[\begin{smallmatrix} 0&1 \\1&0\end{smallmatrix}\right]$,
$\Y= \left[\begin{smallmatrix} 0&-i \\i&0\end{smallmatrix}\right]$,
$\Z= \left[\begin{smallmatrix} 1&0 \\0&-1\end{smallmatrix}\right]$
is called a \textit{Pauli channel}. For a Pauli channel, one can
respectively determine the probabilities $p_x, p_y, p_z$ that an input
qubit in state $\rho$ is subjected to a Pauli $X$, $Y$, or $Z$ error.

A combined \textit{amplitude damping and dephasing channel}
$\mathcal{E}$ with relaxation time $T_1$ and dephasing time $T_2$ that
acts on a qubit with density matrix $\rho=(\rho_{ij})_{i,j\in
\{0,1\}}$ for a time $t$ yields the density matrix
$$ \mathcal{E}(\rho) = 
\left[
\begin{array}{cc}
1-\rho_{11}e^{-t/T_1} & \rho_{01} e^{-t/T_2} \\
\rho_{10} e^{-t/T_2} 
& \rho_{11}e^{-t/T_1} 
\end{array}
\right]. 
$$ This channel is interesting as it models common decoherence
processes fairly well. We would like to determine the probability
$p_x$, $p_y$, and $p_z$ such that an $X$, $Y$, or $Z$ error occurs in
a combined amplitude damping and dephasing channel. However, it turns
out that this question is not well-posed, since $\mathcal{E}$ is not a
Pauli channel, that is, it cannot be written in the form
(\ref{pauli-channel}). However, we can obtain a Pauli channel
$\mathcal{E}_T$ by a technique called twirling
\cite{ESR+:2007,dankert06}. In our case, the twirling consists of
conjugating the channel $\mathcal{E}$ by Pauli matrices and averaging
over the results. The resulting channel $\mathcal{E}_T$ is called the
Pauli-twirl of $\mathcal{E}$ and is explicitly given by
$$ \mathcal{E}_T(\rho) = \frac{1}{4} \sum_{A \in \{ \one, \X,\Y,\Z\}}
A^\dagger \mathcal{E}(A\rho A^\dagger )A.$$ 

\begin{theorem} Given a combined amplitude damping and dephasing 
channel $\mathcal{E}$ as above, the associated Pauli-twirled channel
is of the form
$$\mathcal{E}_T(\rho) = (1-p_x-p_y-p_z)\rho+ p_x\X\rho\X+p_y\Y\rho\Y +
p_z\Z\rho\Z,$$
where $p_x=p_y=(1-e^{-t/T_1})/4$ and $p_z=1/2-p_x-\frac{1}{2}e^{-t/T_2}$. 
In particular,
$$ \frac{p_z}{p_x} =1+2 \frac{1-e^{t/T_1(1-T_1/T_2)}}{e^{t/T_1}-1}.
$$ 
If $t\ll T_1$, then we can approximate this ratio as $2T_1/T_2-1$.
\end{theorem}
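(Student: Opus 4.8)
The plan is to reduce the whole statement to the Bloch-sphere picture, where both $\E$ and its Pauli-twirl act affinely on the Bloch vector and the claimed identities become a matching of three scalar coefficients. I write $\rho = \tfrac12(\one + r_x\X + r_y\Y + r_z\Z)$, so that $\rho_{00}=(1+r_z)/2$, $\rho_{11}=(1-r_z)/2$, and $\rho_{01}=(r_x-ir_y)/2$, and then read off from the given matrix form of $\E(\rho)$ how $\E$ moves the Bloch vector.

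A direct substitution shows that $\E$ sends $(r_x,r_y,r_z)$ to $(e^{-t/T_2}r_x,\; e^{-t/T_2}r_y,\; e^{-t/T_1}r_z + (1-e^{-t/T_1}))$. The nonzero translation in the $z$-component is exactly the non-unital (amplitude-damping) part that prevents $\E$ from being a Pauli channel, and the point of the twirl will be to remove it. Conjugation of the Bloch vector by $\one,\X,\Y,\Z$ applies the four diagonal sign patterns $(+,+,+)$, $(+,-,-)$, $(-,+,-)$, $(-,-,+)$ to $(r_x,r_y,r_z)$. Sandwiching $\E$ between these conjugations and averaging over the four Paulis kills the translation, since the four signs of any fixed component sum to zero, and it leaves the already-diagonal linear part untouched. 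Hence $\E_T$ is unital and acts as $r_j \mapsto \lambda_j r_j$ with $(\lambda_x,\lambda_y,\lambda_z)=(e^{-t/T_2},\, e^{-t/T_2},\, e^{-t/T_1})$, which confirms that it is a Pauli channel.

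To extract the probabilities I compare with the Bloch action of a general Pauli channel, namely $\lambda_x = 1-2(p_y+p_z)$, $\lambda_y = 1-2(p_x+p_z)$, and $\lambda_z = 1-2(p_x+p_y)$. The equality $\lambda_x=\lambda_y$ forces $p_x=p_y$, and solving the three resulting linear relations gives $p_x=p_y=(1-e^{-t/T_1})/4$ and $p_z = \tfrac12 - p_x - \tfrac12 e^{-t/T_2}$, as asserted. The only genuinely delicate bookkeeping is the step producing $p_x=p_y$: in a direct matrix computation of the twirl this appears as the cancellation of the $\rho_{10}$ cross-term in the $(0,1)$ entry of $\E_T(\rho)$, which must vanish in order to match $\rho_{01}e^{-t/T_2}$. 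The Bloch formulation makes this automatic through $\lambda_x=\lambda_y$, which is why I prefer it to grinding the four conjugations out as explicit $2\times2$ matrices.

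For the ratio I substitute the computed values to obtain $p_z/p_x = 2(1-e^{-t/T_2})/(1-e^{-t/T_1}) - 1$, and then multiply numerator and denominator by $e^{t/T_1}$ to bring it into the stated closed form involving $e^{t/T_1(1-T_1/T_2)}$. The small-$t$ approximation then follows from the first-order expansions $e^{t/T_1}-1 \approx t/T_1$ and $1 - e^{t/T_1(1-T_1/T_2)} \approx t/T_2 - t/T_1$, which collapse the ratio to $2T_1/T_2 - 1$. I expect this final algebraic rearrangement and its expansion to be the most error-prone part, but it is otherwise routine; the conceptual work is entirely in the Bloch-vector reduction and the observation that twirling annihilates the amplitude-damping translation.
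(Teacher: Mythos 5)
Your proof is correct, but it takes a genuinely different route from the paper. The paper works at the level of the process ($\chi$-) matrix: it writes the Kraus operators $A_0,A_1,A_2$ of $\E$ as explicit linear combinations of $\one,\X,\Y,\Z$, expands $\E(\rho)$ as a sum of terms $A\rho B$ with Pauli $A,B$, and then invokes the twirling lemma of Dankert et al.\ to assert that the Pauli-twirl simply deletes the off-diagonal terms ($\one\rho\Z$, $\Z\rho\one$, $\X\rho\Y$, $\Y\rho\X$), so that $p_x,p_y,p_z$ can be read off as the surviving diagonal coefficients $\gamma/4$, $\gamma/4$, $(2-\gamma-2\sqrt{1-\lambda-\gamma})/4$. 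You instead work in the Bloch-sphere picture, showing that $\E$ acts affinely as $r\mapsto Dr+c$ with $D=\mathrm{diag}(e^{-t/T_2},e^{-t/T_2},e^{-t/T_1})$ and proving directly that averaging over the four Pauli conjugations annihilates the translation $c$ while fixing $D$; you then solve the linear system $\lambda_x=1-2(p_y+p_z)$, etc. Your version is more self-contained (the sign-cancellation argument replaces the cited lemma) and makes the mechanism transparent --- the twirl removes exactly the non-unital part. The paper's version buys a little generality: the $\chi$-matrix expansion does not rely on the qubit-specific Bloch parametrization and exhibits the precise cross terms being discarded. The two derivations of course yield the same $p_z=(1+e^{-t/T_1}-2e^{-t/T_2})/4$, and your ratio computation and small-$t$ expansion agree with the paper's. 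One cosmetic remark: to reach the stated form of $p_z/p_x$ you should first rewrite your expression as $1+2\bigl(e^{-t/T_1}-e^{-t/T_2}\bigr)/\bigl(1-e^{-t/T_1}\bigr)$ and only then multiply that fraction through by $e^{t/T_1}$; multiplying your original two-term expression by $e^{t/T_1}$ as literally described does not land on the target form in one step, though this is purely a matter of presentation.
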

\begin{proof}
  The Kraus operator decomposition \cite{NC:2000} of $\mathcal{E}$ is
\begin{eqnarray}
\mathcal{E}(\rho) = \sum_{k=0}^2A_k\rho A_k^\dagger,\label{eq:krausDecomp}
\end{eqnarray}
where  $A_0= \left[\begin{smallmatrix} 1&0
\\0&\sqrt{1-\lambda-\gamma}\end{smallmatrix}\right]; A_1 = \left[\begin{smallmatrix} 0&
0\\0&\sqrt{\lambda}\end{smallmatrix}\right];A_2 = \left[\begin{smallmatrix} 0&
\sqrt{\gamma}\\0& 0\end{smallmatrix}\right],$
and $\sqrt{1-\gamma-\lambda}= e^{-t/T_2}$,  $1-\gamma=e^{-t/T_1}$. We can rewrite the Kraus
operators $A_i$  as 
\begin{eqnarray*}
A_0=\frac{1+\sqrt{1-\lambda-\gamma}}{2}\one +\frac{1-\sqrt{1-\lambda-\gamma}}{2}\Z,\\
A_1=\frac{\sqrt{\lambda}}{2}\one -\frac{\sqrt{\lambda}}{2}\Z, \quad A_2=\frac{\sqrt{\gamma}}{2}\X -\frac{\sqrt{\gamma}}{2i}\Y.
%A_2&=&\frac{\sqrt{\gamma}}{2}\X +\frac{\sqrt{\gamma}}{2i}\Y,
\end{eqnarray*}
Rewriting $\E(\rho)$ in terms of Pauli matrices 
yields 
\begin{eqnarray}
\E(\rho) &=& \frac{2-\gamma+2\sqrt{1-\lambda-\gamma}}{4} \rho + \frac{\gamma}{4}\X\rho\X +\frac{\gamma}{4}\Y\rho\Y \nonumber \\ 
&+& \frac{2-\gamma-2\sqrt{1-\lambda-\gamma}}{4}\Z\rho\Z \nonumber \\
&-& \frac{\gamma}{4}\one\rho\Z -\frac{\gamma}{4}\Z\rho\one +\frac{\gamma}{4i} \X\rho\Y
-\frac{\gamma}{4i}\Y\rho\X.
\end{eqnarray}
It follows that the Pauli-twirl channel $\mathcal{E}_T$ is of the
claimed form, see~\cite[Lemma 2]{dankert06}.
Computing the ratio $p_z/p_x$ we get
\begin{eqnarray*}
	\frac{p_z}{p_x} &=&\frac{2-\gamma-2\sqrt{1-\lambda-\gamma}}{\gamma} =\frac{1+e^{-t/T_1}-2e^{-t/T_2}}{1-e^{-t/T_1}},\\
	&=&1+2\frac{e^{-t/T_1}-e^{-t/T_2}}{1-e^{-t/T_1}} = 1+2 \frac{1-e^{t/T_1-t/T_2}}{e^{t/T_1}-1} \\
	&=&1+2 \frac{1-e^{t/T_1(1-T_1/T_2)}}{e^{t/T_1}-1}.
\end{eqnarray*}
If $t\ll T_1$, then we can approximate the ratio as $2T_1/T_2-1$, as claimed. 
\end{proof}

Thus, an asymmetry in the $T_1$ and $T_2$ times does translate to an
asymmetry in the occurrence of bit flip and phase flip errors.  Note
that $p_x=p_y$ indicating that the $\Y$ errors are as unlikely as the
$\X$ errors.  We shall refer to the ratio $p_z/p_x$ as the channel
asymmetry and denote this parameter by $A$.

Asymmetric quantum codes use the fact that the phase flip errors are much
more likely than the bit flip errors or the combined bit-phase flip
errors.  Therefore the code has different error correcting capability
for handling different type of errors. We require the code to correct
many phase flip errors but it is not required to handle the same
number of bit flip errors. If we assume a CSS code
\cite{calderbank98}, then we can meaningfully speak of $X$-distance
and $Z$-distance. A CSS stabilizer code that can detect all $X$ errors
up to weight $d_x-1$ is said to have an $X$-distance of $d_x$.
Similarly if it can detect all $Z$ errors upto weight $d_z-1$, then it
is said to have a $Z$-distance of $d_z$. We shall denote such a code
by $[[n,k,d_x/d_z]]_q$ to indicate it is an asymmetric code, see also
\cite{Steane:96} who was the first to use a notation that allowed to
distinguish between $X$- and $Z$-distances.  We could also view this
code as an $[[n,k,\min\{ d_x,d_z\}]]_q$ stabilizer code.  Further
extension of these metrics to an additive non-CSS code is an
interesting problem, but we will not go into the details here.

Recall that in the CSS construction a pair of codes are used, one for
correcting the bit flip errors and the other for correcting the phase
flip errors. Our choice of these codes will be such that the code for
correcting the phase flip errors has a larger distance than the code
for correcting the bit flip errors. We restate the CSS construction in
a form convenient for asymmetric stabilizer codes.

\begin{lemma}[CSS Construction \cite{calderbank98}]\label{lm:css}
  Let $C_x, C_z$ be linear codes over $\F_q^n$ with the parameters
  $[n,k_x]_q$, and $[n,k_z]_q$ respectively. Let $C_x^\perp\subseteq
  C_z$.  Then there exists an $[[n,k_x+k_z-n,d_x/d_z]]_q$ asymmetric
  quantum code, where $d_x=\wt(C_x\setminus C_z^\perp)$ and
  $d_z=\wt(C_z\setminus C_x^\perp)$.
\end{lemma}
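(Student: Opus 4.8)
The plan is to realize the code as an honest CSS stabilizer code and then read off both distances from the standard coset description, being careful to keep the $X$- and $Z$-analyses separate so that the asymmetry is exposed. First I would fix parity check matrices: let $H_x$ be a parity check matrix of $C_x$, whose rows span $C_x^\perp$, and let $H_z$ be a parity check matrix of $C_z$, whose rows span $C_z^\perp$. I would then define the stabilizer group $S$ by taking the $\Z$-type generators from the rows of $H_x$ (that is, from $C_x^\perp$) and the $\X$-type generators from the rows of $H_z$ (that is, from $C_z^\perp$), where $\X(v)$ and $\Z(w)$ denote the tensor products of Pauli $\X$ and $\Z$ operators prescribed by vectors $v,w\in\F_q^n$.

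The first thing to verify is that $S$ is abelian. For two pure-type families this reduces to requiring that every $\X$-generator commute with every $\Z$-generator, which holds precisely when $v\cdot w=0$ for all $v\in C_z^\perp$ and all $w\in C_x^\perp$, i.e.\ when $C_x^\perp\subseteq (C_z^\perp)^\perp=C_z$. This is exactly the hypothesis, so $S$ is well defined. Counting independent generators gives $\dim C_x^\perp+\dim C_z^\perp=(n-k_x)+(n-k_z)$, so the code encodes $n-(n-k_x)-(n-k_z)=k_x+k_z-n$ qudits, as claimed.

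Next I would compute the two distances independently. An $\X$-type error $\X(e)$ is undetectable exactly when it commutes with all $\Z$-stabilizers, i.e.\ when $e\perp C_x^\perp$, equivalently $e\in C_x$; and it acts trivially on the code space exactly when $\X(e)\in S$, i.e.\ $e\in C_z^\perp$. Hence the $X$-errors that are undetectable yet act nontrivially are precisely those with $e\in C_x\setminus C_z^\perp$, so any $X$-error of weight less than $d_x=\wt(C_x\setminus C_z^\perp)$ is either detectable or acts as the identity on the code. The $Z$-analysis is dual: $\Z(e)$ is undetectable iff $e\perp C_z^\perp$, i.e.\ $e\in C_z$, and trivial iff $e\in C_x^\perp$, yielding $d_z=\wt(C_z\setminus C_x^\perp)$.

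I expect no genuine obstacle here; the only delicate point is the bookkeeping of which dual plays which role. Since the $\Z$-stabilizers are exactly what detect $X$-errors (and vice versa), it is the code $C_x$, whose dual supplies the $\Z$-generators, that governs the $X$-distance, and symmetrically $C_z$ governs the $Z$-distance. Keeping the two error types decoupled is precisely what upgrades the symmetric CSS statement to the asymmetric one. It is worth noting that the single hypothesis $C_x^\perp\subseteq C_z$ does double duty: taking duals it is equivalent to $C_z^\perp\subseteq C_x$, so it is at once the commutation condition for $S$ and the containment guaranteeing that $C_x\setminus C_z^\perp$ and $C_z\setminus C_x^\perp$ are the correct carriers of the logical operators. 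Detection of all lower-weight errors then follows immediately from this coset characterization.
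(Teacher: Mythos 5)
Your proof is correct. The paper itself gives no proof of this lemma---it is imported verbatim from Calderbank--Rains--Shor--Sloane \cite{calderbank98}---and your argument is the standard one: $\Z$-generators from $C_x^\perp$, $\X$-generators from $C_z^\perp$, commutativity from the single containment $C_x^\perp\subseteq C_z$ (equivalently $C_z^\perp\subseteq C_x$), dimension count $k_x+k_z-n$, and the two distances read off from the coset characterization of undetectable-but-nontrivial errors of each pure type. The only point worth flagging is your phrase ``which holds precisely when $v\cdot w=0$'': for non-prime $q$ the commutation of individual generalized Pauli operators is governed by $\tr_{q/p}(v\cdot w)=0$ rather than $v\cdot w=0$; but since the generators must be taken over all $\F_q$-multiples to produce an $\F_q$-linear stabilizer, nondegeneracy of the trace form makes the two conditions equivalent at the level of the codes, so your conclusion stands.
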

If in the above construction $d_x=\wt(C_x)$ and $d_z=\wt(C_z)$, then
we say that the code is pure. 

In the theorem above and elsewhere in this paper  $\F_q$ denotes a finite field with $q$ elements.
We also denote a $q$-ary narrow-sense primitive BCH code of length $n=q^m-1$
and design distance $\delta$ as $\bch(\delta)$. 

\section{Asymmetric Quantum Codes from LDPC Codes}

In \cite{ioffe07}, Ioffe and M\'{e}zard used a combination of BCH and
LDPC codes to construct asymmetric codes. The intuition being that the
stronger LDPC code should be used for correcting the phase flip errors
and the BCH code can be used for the infrequent bit flips. This
essentially reduces to finding a good LDPC code such that the dual of
the LDPC code is contained in the BCH code. They solve this problem by
randomly choosing codewords in the BCH code which are of low weight
(so that they can be used for the parity check matrix of the LDPC
code).  However, this method leaves open how good the resulting LDPC
code is.  For instance, the degree profiles of the resulting code are
not regular and there is little control over the final degree profiles
of the code.  Furthermore, it is not apparent what ensemble or degree
profiles one will use to analyze the code.

We propose an alternate scheme that uses LDPC codes to construct asymmetric
stabilizer codes. We propose two families of quantum codes based on LDPC codes.  In the
first case we use LDPC codes for both the $X$ and $Z$ channel while in the
second construction we will use a combination of BCH and LDPC codes. But 
first, we will need the following facts about generalized Reed-Muller 
codes and finite geometry LDPC codes.

\subsection{Finite Geometry LDPC Codes (\cite{kou01,tang05})} 
Let us denote by $\eg(m,p^s)$ the Euclidean finite geometry over
$\F_{p^s}$ consisting of $p^{ms}$ points. For our purposes it suffices
to use the fact that this geometry is equivalent to the vector space
$\F_{p^s}^m$.  A $\mu$-dimensional subspace of $\F_{p^s}^m$ or its
coset is called a \textsl{$\mu$-flat}. Assume that $0\leq \mu_1<
\mu_2\leq m$.  Then we denote by $N_{\eg}(\mu_2,\mu_1,s,p)$ the number
of $\mu_1$-flats in a $\mu_2$-flat and by $A_{\eg}(m,\mu_2,\mu_1,
s,p)$, the number of $\mu_2$-flats that contain a given $\mu_1$-flat.
These are given by (see \cite{tang05})
\begin{eqnarray}
N_{\eg}(\mu_2,\mu_1,s,p)&=& q^{(\mu_2-\mu_1)}\prod_{i=1}^{\mu_1}\frac{q^{\mu_2-i+1}-1}{q^{\mu_1-i+1}-1},\label{eq:nEG}\\
A_{\eg}(m,\mu_2,\mu_1,s,p)&=& \prod_{i=\mu_1+1}^{\mu_2}\frac{q^{m-i+1}-1}{q^{\mu_2-i+1}-1}, \label{eq:aEG}
\end{eqnarray}
where $q=p^s$.  Index all the $\mu_1$-flats from $i=1$ to
$n=N_{\eg}(m,\mu_1,s,p)$ as $F_i$.  Let $F$ be a $\mu_2$-flat in
$\eg(m,p^s)$.  Then we can associate an incidence vector to $F$ with
respect to the $\mu_1$ flats as follows.
$$
\mathbf{i}_F = \left\{i_j\mid \begin{array}{cl}i_j =1 &\mbox{ if $F_j$ is contained in } F \\
i_j=0& \text{otherwise.}\end{array}
\right\}.
$$
Index the $\mu_2$-flats from $j=1$ to $J=N_{\eg}(m,\mu_2,s,p)$.
Construct the $J\times n $ matrix $H_{\eg}^{(1)}(m,\mu_2,\mu_1,s,p)$
whose rows are the incidence vectors of all the $\mu_2$-flats with
respect to the $\mu_1$-flats. This matrix is also referred to as the
incidence matrix.  Then the type-I Euclidean geometry code from
$\mu_2$-flats and $\mu_1$-flats is defined to be the null space,
i.\,e., Euclidean dual code) of the $\F_p$-linear span of
$H_{\eg}^{(1)}(m,\mu_2,\mu_1,s,p)$. This is denoted as
$C_{\eg}^{(1)}(m,\mu_2,\mu_1,s,p)$.  Let
$H_{\eg}^{(2)}(m,\mu_2,\mu_1,s,p) =
H_{\eg}^{(1)}(m,\mu_2,\mu_1,s,p)^t.$ Then the type-II Euclidean
geometry code $C_{\eg}^{(2)}(m,\mu_2,\mu_1,s,p)$ is defined to be the
null space of $H_{\eg}^{(2)}(m,\mu_2,\mu_1,s,p)$. Let us now consider
the $\mu_2$-flats and $\mu_1$-flats that do not contain the origin of
$\eg(m,p^s)$. Now form the incidence matrix of the $\mu_2$-flats with
respect to the $\mu_1$-flats not containing the origin.  The null
space of this incidence matrix gives us a quasi-cyclic code in
general, which we denote by $C_{\eg,c}^{(1)}(m,\mu_2,\mu_1,s,p)$, see
\cite{tang05}.

\subsection{Generalized Reed-Muller codes (\cite{kasami68})} 
Let $\alpha$ be a primitive element in $\F_{q^{m}}$.  The cyclic
generalized Reed-Muller code of length $q^m-1$ and order $\nu$ is
defined as the cyclic code with the generator polynomial whose roots
$\alpha^j$ satisfy $0<j\leq m(q-1)-\nu-1$. The generalized Reed-Muller
code is the singly extended code of length $q^m$.  It is denoted as
$\Rm_q(\nu,m)$. The dual of a GRM code is also a GRM code
\cite{assmus98,blahut03,kasami68}.  It is known that
\begin{eqnarray}
\Rm_q(\nu,m)^\perp = \Rm_q(\nu^\perp,m),% \quad \text{with} \quad \nu^\perp = m(q-1)-1-\nu.
\end{eqnarray}
where $\nu^\perp = m(q-1)-1-\nu$.

Let $C$ be a linear code over $\F_{q^s}^n$. Then we define
$C|_{\F_q}$, the \textsl{subfield subcode} of $C$ over $\F_q^n$ as the
codewords of $C$ which are entirely in $\F_q^n$, (see
\cite[pages~116-120]{huffman03}). Formally this can be expressed as
\begin{eqnarray}
C|_{\F_q} = \{c\in C\mid c\in \F_q^n \}.
 \end{eqnarray}
Let $C\subseteq \F_{q^l}^n$. The the \textsl{trace code} of $C$ over $\F_q$ is defined  as
\begin{eqnarray}
\tr_{q^l/q}(C) = \{\tr_{q^l/q}(c) \mid c\in C\}.
\end{eqnarray}
There are interesting relations between the trace code and the
subfield subcode. One of which is the following result which we will
need later.
\begin{lemma}\label{lm:inclusion}
Let $C\subseteq \F_{q^l}^n$. Then $C|_{\F_q}$, the subfield subcode of $C$
is contained in $\tr_{q^l/q}(C)$, the trace code of $C$. In other words
$$ C|_{\F_q}\subseteq \tr_{q^l/q}(C).$$
\end{lemma}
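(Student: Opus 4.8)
The plan is to exploit the $\F_q$-linearity and surjectivity of the trace map, together with the fact that $C$ is linear over the large field $\F_{q^l}$. The tempting first move—to note that $\tr_{q^l/q}$ restricted to $\F_q$ acts as multiplication by $l=[\F_{q^l}:\F_q]$ and hence to hope that $\tr_{q^l/q}(c)=c$ for $c\in C|_{\F_q}$—does not work in general, since $l$ may be divisible by the characteristic $p$, in which case the restricted trace is identically zero. So rather than feeding $c$ directly into the trace, I would first rescale $c$ by a suitable element of $\F_{q^l}$ to absorb this factor.

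First I would recall that $\tr_{q^l/q}\colon\F_{q^l}\to\F_q$ is a nonzero $\F_q$-linear functional, hence surjective; in particular there exists $\beta\in\F_{q^l}$ with $\tr_{q^l/q}(\beta)=1$. Fix such a $\beta$. Now take an arbitrary $c=(c_1,\dots,c_n)\in C|_{\F_q}$, so that $c\in C$ and every $c_i\in\F_q$. Because $C$ is $\F_{q^l}$-linear, the scaled word $\beta c=(\beta c_1,\dots,\beta c_n)$ again lies in $C$. Applying the trace componentwise and using its $\F_q$-linearity together with $c_i\in\F_q$, one obtains
$$\tr_{q^l/q}(\beta c_i)=c_i\,\tr_{q^l/q}(\beta)=c_i.$$
Hence $\tr_{q^l/q}(\beta c)=c$, which exhibits $c$ as the trace of the codeword $\beta c\in C$; thus $c\in\tr_{q^l/q}(C)$. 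Since $c$ was arbitrary, this gives $C|_{\F_q}\subseteq\tr_{q^l/q}(C)$.

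The only step that requires any thought is the choice of $\beta$, and that is immediate from surjectivity of the trace; the rest is a routine componentwise verification. The conceptual obstacle—minor, but worth flagging—is recognizing at the outset that the identity map on $\F_q$ is generally \emph{not} the restriction of $\tr_{q^l/q}$, so one cannot realize $c$ as its own trace and must instead pass through a trace-one scalar to compensate for the factor coming from the degree $l$ of the extension.
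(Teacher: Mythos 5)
Your proof is correct and is essentially identical to the paper's: both scale $c$ by a trace-one element $\beta\in\F_{q^l}$ (using surjectivity of the trace form), use $\F_{q^l}$-linearity of $C$ to keep $\beta c$ inside $C$, and recover $c=\tr_{q^l/q}(\beta c)$ componentwise. Your additional remark about why one cannot simply take the trace of $c$ itself is a sensible clarification but does not change the argument.
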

\begin{proof}
  Let $c\in C|_{\F_q} \subseteq \F_q^n$ and $\alpha \in \F_{q^l}$.
  Then $\tr_{q^l/q}(\alpha c)= c\tr_{q^l/q}(\alpha)$ as $c\in \F_q^n$.
  Since trace is a surjective form, there exists some $\alpha\in
  \F_{q^l}$, such that $\tr_{q^l/q}(\alpha)=1$. This implies that
  $c\in \tr_{q^l/q}(C)$. Since $c$ is an arbitrary element in
  $C|_{\F_q}$ it follows that $ C|_{\F_q}\subseteq \tr_{q^l/q}(C)$.
\end{proof}

Let $q=p^s$, then the Euclidean geometry code of order $r$ over
$\eg(m,p^s)$ is defined as the dual of the subfield subcode of
$\Rm_{q}((q-1)(m-r-1),m)$, \cite[page~448]{blahut03}. The type-I LDPC
code $C_{\eg}^{(1)}(m,\mu,0,s,p)$ code is an Euclidean geometry code
of order $\mu-1$ over $\eg(m,p^s)$, see \cite{tang05}.
Hence its dual is the subfield subcode of $\Rm_q((q-1)(m-\mu),m)$
code.  In other words,
\begin{eqnarray}
C_{\eg}^{(1)}(m,\mu,0,s,p)^\perp = 
\Rm_q((q-1)(m-\mu),m)|_{\F_p}.\label{eq:fgSubfiledCode}
\end{eqnarray}
Further, Delsarte's theorem \cite{delsarte75} tells us that 
\begin{eqnarray*}
C_{\eg}^{(1)}(m,\mu,0,s,p)&=&\Rm_q((q-1)(m-\mu),m)|_{\F_p}^\perp ,\\
&=&\tr_{q/p}\left( \Rm_q((q-1)(m-\mu),m)^\perp\right)\\
&=&  \tr_{q/p}(\Rm_q(\mu(q-1)-1,m)).
\end{eqnarray*}
Hence,  $C_{\eg}^{(1)}(m,\mu,0,s,p)$ code can also be  related to $\Rm_q(\mu(q-1)-1,m)$ as 
\begin{eqnarray}
  C_{\eg}^{(1)}(m,\mu,0,s,p)=\tr_{q/p}(\Rm_q(\mu(q-1)-1),m).\label{eq:fgTraceCode}
\end{eqnarray}

\subsection{New families of asymmetric quantum codes}
With the previous preparation we are now ready to construct asymmetric quantum
codes from finite geometry LDPC codes. 

\begin{theorem}[Asymmetric EG LDPC Codes]
  Let $p$ be a prime, with $q=p^s$ and $s\geq 1,m\geq 2$. Let $1<
  \mu_z <m$ and $m-\mu_z+1 \leq \mu_x<m$.  Then there exists an
$$
[[p^{ms},k_x+k_z-p^{ms} , d_x/d_z]]_p
$$
asymmetric EG LDPC code, where 
$$k_x=\dim C_{\eg}^{(1)}(m,\mu_x,0,s,p); \quad  k_z=\dim C_{\eg}^{(1)}(m,\mu_z,0,s,p).$$ For the distances 
$d_x\geq A_{\eg}(m,\mu_x,\mu_x-1,s,p)+1$ and $d_z\geq
A_{\eg}(m,\mu_z,\mu_z-1,s,p)+1$ hold.
\end{theorem}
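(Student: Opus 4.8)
The plan is to obtain the code as an instance of the CSS construction in Lemma~\ref{lm:css}, taking $C_x=C_{\eg}^{(1)}(m,\mu_x,0,s,p)$ and $C_z=C_{\eg}^{(1)}(m,\mu_z,0,s,p)$. Both are type-I finite geometry codes whose coordinates are indexed by the $0$-flats (points) of $\eg(m,p^s)$, so each has length $N_{\eg}(m,0,s,p)=p^{ms}$ and, by definition, dimensions $k_x$ and $k_z$. Granting the hypothesis of Lemma~\ref{lm:css}, that lemma immediately returns an $[[p^{ms},k_x+k_z-p^{ms},d_x/d_z]]_p$ code, so all the real work lies in (i) verifying the containment $C_x^\perp\subseteq C_z$ and (ii) lower bounding $d_x$ and $d_z$.

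For (i) I would compose the identities already recorded for these codes so that the inclusions chain in the right direction. By (\ref{eq:fgSubfiledCode}) the dual is the subfield subcode $C_x^\perp=\Rm_q((q-1)(m-\mu_x),m)|_{\F_p}$, and Lemma~\ref{lm:inclusion} places it inside the trace code $\tr_{q/p}(\Rm_q((q-1)(m-\mu_x),m))$. Since generalized Reed--Muller codes are nested by order and the trace map is monotone under inclusion, it suffices to establish the numerical inequality $(q-1)(m-\mu_x)\le\mu_z(q-1)-1$; this gives $\Rm_q((q-1)(m-\mu_x),m)\subseteq\Rm_q(\mu_z(q-1)-1,m)$, whence $\tr_{q/p}(\Rm_q((q-1)(m-\mu_x),m))\subseteq\tr_{q/p}(\Rm_q(\mu_z(q-1)-1,m))=C_z$ by (\ref{eq:fgTraceCode}). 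Chaining the three steps yields $C_x^\perp\subseteq C_z$.

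The numerical inequality is exactly where the hypothesis $m-\mu_z+1\le\mu_x$ enters: it rearranges to $m-\mu_x\le\mu_z-1$, and multiplying by $q-1\ge1$ gives $(q-1)(m-\mu_x)\le(q-1)(\mu_z-1)=\mu_z(q-1)-(q-1)\le\mu_z(q-1)-1$. For (ii), Lemma~\ref{lm:css} sets $d_x=\wt(C_x\setminus C_z^\perp)$ and $d_z=\wt(C_z\setminus C_x^\perp)$; since the minimum nonzero weight taken over the subset $C_x\setminus C_z^\perp$ is at least $\wt(C_x)$ (and likewise for $C_z$), we have $d_x\ge\wt(C_x)$ and $d_z\ge\wt(C_z)$, so it is enough to bound the minimum distance of each finite geometry code on its own. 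Here I would invoke the known distance estimate for these codes, coming from their one-step majority-logic (orthogonal check sum) structure built on the $\mu$-flats through a fixed $(\mu-1)$-flat, namely $\wt(C_{\eg}^{(1)}(m,\mu,0,s,p))\ge A_{\eg}(m,\mu,\mu-1,s,p)+1$, see \cite{kou01,tang05}; applying it with $\mu=\mu_x$ and $\mu=\mu_z$ produces the two stated bounds.

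The main obstacle is step (i), and specifically the bookkeeping of orders: one must be careful that the subfield-subcode and trace-code translations, together with GRM nesting, compose to give $C_x^\perp\subseteq C_z$ and not the reverse containment, which is why a lower bound $m-\mu_z+1\le\mu_x$ (rather than an upper bound) on $\mu_x$ is the natural hypothesis. Step (ii) is essentially a citation to the finite geometry LDPC literature; its only subtlety is to confirm that the orthogonal check sums used there are supported on the point geometry underlying $C_x$ and $C_z$ exactly as those codes are defined above.
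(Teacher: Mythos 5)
Your proposal is correct and follows essentially the same route as the paper: both reduce to the CSS construction of Lemma~\ref{lm:css}, establish $C_x^\perp\subseteq C_z$ by combining the subfield-subcode/trace-code identities (\ref{eq:fgSubfiledCode}) and (\ref{eq:fgTraceCode}) with Lemma~\ref{lm:inclusion} and GRM nesting via the inequality $m-\mu_x\le\mu_z-1$, and cite the finite geometry literature for the two distance bounds. The only cosmetic difference is that you apply Lemma~\ref{lm:inclusion} to $\Rm_q((q-1)(m-\mu_x),m)$ and push the nesting through the trace map, whereas the paper applies it to $\Rm_q(\mu_z(q-1)-1,m)$ and pushes the nesting through the subfield subcode operation; the two chains of inclusions are equivalent.
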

\begin{proof}
  Let $C_z=C_{\eg}^{(1)}(m,\mu_z,0,s,p)$.  Then from
  equation~(\ref{eq:fgTraceCode}) we have
\begin{eqnarray*}
C_z &=& \tr_{q/p}(\Rm_q(\mu_z(q-1)-1,m).
\end{eqnarray*}
By Lemma~\ref{lm:inclusion} we know that 
\begin{eqnarray*}
%\tr_{q/p}(\Rm_q(\mu_z(q-1)-1),m) &\supseteq&  \Rm_q(\mu_z(q-1)-1,m)|_{\F_p},\\
C_z &\supseteq&  \Rm_q(\mu_z(q-1)-1,m)|_{\F_p},\\
C_z&\supseteq & \Rm_q((q-1)(m-(m-\mu_z+1)),m)|_{\F_p},
\end{eqnarray*}
where the last inclusion follows from the nesting property of the
generalized Reed-Muller codes. For any order $\mu_x$ such that
$m-\mu_z+1\leq \mu_x <m$, let $C_x=C_{\eg}^{(1)}(m,\mu_x,0,s,p)$. Then
$C_x$ is an LDPC code whose dual $C_x^\perp =
\Rm_q((q-1)(m-\mu_x),m)|_{\F_p}$ is contained in $C_z$. Thus we can
use Lemma~\ref{lm:css} to form an asymmetric code with the parameters
$$ 
[[p^{ms},k_x+k_z-p^{ms} , d_x/d_z]]_p
$$
The distance of $C_z$ and $C_x$ are at lower bounded as 
$d_x \geq A_{\eg}(m,\mu_x,\mu_x-1,s,p)+1$ 
and $d_z \geq A_{\eg}(m,\mu_z,\mu_z-1,s,p)+1$  (see \cite{tang05}).
\end{proof}

In the construction just proposed, we should choose $C_z$ to be a
stronger code compared to $C_x$. We have given the construction over 
a nonbinary alphabet even though the case $p=2$ might be of particular 
interest.

Our next construction makes use of the cyclic finite geometry
codes.  Our goal will be to find a small BCH code whose dual is
contained in a cyclic Euclidean geometry LDPC code. For solving this
problem we need to know the cyclic structure of
$C_{\eg,c}^{(1)}(m,\mu,0,s,p)$. Let $\alpha$ be a primitive element in
$\F_{p^{ms}}$. Then the roots of generator polynomial of
$C_{\eg,c}^{(1)}(m,\mu,0,s,p)$ are given by
\cite[Theorem~6]{kasami71}, see also \cite{kasami68b,lin04}. Now, 
$$
Z= \{\alpha^h\mid 0< \max_{0\leq l<s} W_{p^s}(h p^l) \leq
(p^s-1)(m-\mu) \},
$$
where $W_{q}(h)$ is the $q$-ary weight of $h = h_0+h_1q+\cdots +
h_kq^{k-1}$, i.\,e., $W_q(h)=\sum h_i$.  The finite geometry code
$C_{\eg,c}^{(1)}(m,\mu,0,s,p)$ is actually an $(\mu-1,p^s)$ Euclidean
geometry code.  The roots of the generator polynomial of the dual code
are given by
$$
Z^{\perp}= \{\alpha^h \mid \min_{0\leq l<s} W_{p^s}(hp^l) < \mu(p^s-1) \}.
$$
In fact, the dual code is the even-like subcode of a primitive
polynomial code of length $p^{ms}-1$ over $\F_p$ and order $m-\mu$,
whose generator polynomial, by \cite[Theorem~6]{kasami68b}, has the
roots
$$
Z_p= \{\alpha^h \mid 0< \min_{0\leq l<s} W_{p^s}(hp^l) < \mu(p^s-1) \}.
$$
Thus $Z^\perp = Z_p\cup \{ 0\}$.  Now by \cite[Theorem~11]{kasami68b},
$Z_p$ and therefore $Z^\perp$ contain the sequence of consecutive
roots, $\alpha,\alpha^2,\ldots, \alpha^{\delta_0-1}$, where
%$$\delta_0=(R+1)p^{Qs}-1 \mbox{ and } m(p^s-1)-(m-\mu)(p^s-1) = Q(p^s-1)+R.$$
$\delta_0=(R+1)p^{Qs}-1$ and $m(p^s-1)-(m-\mu)(p^s-1) = Q(p^s-1)+R$.
Simplifying, we see that $R=0$ and $Q=\mu$ giving $\delta_0 = p^{\mu s}-1$.
It follows that 
\begin{eqnarray*}
C_{\eg,c}^{(1)}(m,\mu,0,s,p)^\perp &=& \Rm_q(m,(q-1)(m-\mu))|_{\F_p} \\
&\subseteq& \bch(\delta_0).
\end{eqnarray*}

Thus we have solved the problem of construction of the asymmetric
stabilizer codes in a dual fashion to that of \cite{ioffe07}. Instead of
finding an LDPC code whose parity check matrix is contained in a given
BCH code, we have found a BCH code whose parity check matrix is
contained in a given finite geometry LDPC code. This gives us the
following result.
\begin{theorem}[Asymmetric BCH-LDPC stabilizer codes]
  Let $C_z=C_{\eg,c}^{(1)}(m,\mu,0,s,p)$ and $\delta\leq \delta_0=p^{\mu s}-1$.  
  Let $n=p^{ms}-1$ and $C_x=\bch(\delta)\subseteq \F_p^n$. Then there exists an
$$[[n,k_x+k_z-n,d_x/d_z]]_p$$
asymmetric stabilizer code where $d_z\geq A_{\eg}(m,\mu,\mu-1,s,p)$,
$d_x\geq \delta$  and $k_x=\dim C_x$, $k_z=\dim C_z$.
\end{theorem}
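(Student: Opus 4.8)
The plan is to verify the hypotheses of the CSS construction in Lemma~\ref{lm:css} for the specific pair $(C_x, C_z)$ given in the statement, and then read off the parameters. The essential work has already been done in the discussion immediately preceding the theorem, so the proof is largely a matter of assembling those ingredients. First I would set $C_z = C_{\eg,c}^{(1)}(m,\mu,0,s,p)$ and $C_x = \bch(\delta)$, both linear codes over $\F_p^n$ with $n = p^{ms}-1$. To apply the lemma I must check the single nesting condition $C_x^\perp \subseteq C_z$, equivalently $C_z^\perp \subseteq C_x$.

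The key step is therefore to establish $C_z^\perp \subseteq \bch(\delta)$. Here I would invoke the displayed computation preceding the statement: the dual $C_{\eg,c}^{(1)}(m,\mu,0,s,p)^\perp$ is the subfield subcode $\Rm_q(m,(q-1)(m-\mu))|_{\F_p}$, and its generator polynomial was shown to have as roots the consecutive powers $\alpha, \alpha^2, \ldots, \alpha^{\delta_0-1}$ with $\delta_0 = p^{\mu s}-1$ (via the analysis of the root sets $Z_p$ and $Z^\perp$ and the consecutive-root bound from \cite[Theorem~11]{kasami68b}). By the BCH bound and the definition of the narrow-sense primitive BCH code, having $\delta_0 - 1$ consecutive roots starting at $\alpha$ means $C_z^\perp \subseteq \bch(\delta_0)$. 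Since $\delta \leq \delta_0$, the nesting of BCH codes by design distance gives $\bch(\delta_0) \subseteq \bch(\delta)$, hence $C_z^\perp \subseteq \bch(\delta) = C_x$, which is exactly $C_x^\perp \subseteq C_z$ after dualizing. With this inclusion in hand, Lemma~\ref{lm:css} immediately yields an $[[n, k_x+k_z-n, d_x/d_z]]_p$ asymmetric code with $k_x = \dim C_x$ and $k_z = \dim C_z$.

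For the distance bounds I would argue separately for each channel. The $X$-distance satisfies $d_x = \wt(C_x \setminus C_z^\perp) \geq \wt(C_x) \geq \delta$, the last inequality being the BCH bound for $\bch(\delta)$. The $Z$-distance satisfies $d_z = \wt(C_z \setminus C_x^\perp) \geq \wt(C_z)$, and the weight of the cyclic finite geometry code $C_{\eg,c}^{(1)}(m,\mu,0,s,p)$ is lower bounded by $A_{\eg}(m,\mu,\mu-1,s,p)$, the number of $\mu$-flats through a given $(\mu-1)$-flat; this is the standard minimum-distance estimate for these geometry codes recorded in \cite{tang05}.

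The main obstacle is not any single calculation but rather bookkeeping: one must be careful that the direction of the inclusion matches what Lemma~\ref{lm:css} requires (the lemma asks for $C_x^\perp \subseteq C_z$, whereas the geometry analysis naturally produces a statement about $C_z^\perp$), and that the consecutive-root count of $\delta_0 - 1$ genuinely certifies membership in $\bch(\delta_0)$ in the narrow-sense primitive convention used here. Once the correspondence between the root-set description of $C_z^\perp$ and the defining root set of $\bch(\delta)$ is pinned down, the remaining steps are routine applications of the cited lemma and the BCH bound.
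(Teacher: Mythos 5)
Your proposal is correct and follows essentially the same route as the paper: the paper presents this theorem as an immediate consequence of the preceding discussion, which establishes $C_{\eg,c}^{(1)}(m,\mu,0,s,p)^\perp \subseteq \bch(\delta_0)$ via the root-set analysis and the consecutive-root bound of \cite[Theorem~11]{kasami68b}, and then applies Lemma~\ref{lm:css} together with the BCH bound and the distance estimate from \cite{tang05}. Your explicit handling of the dualization $C_z^\perp \subseteq C_x \Leftrightarrow C_x^\perp \subseteq C_z$ and the nesting $\bch(\delta_0)\subseteq\bch(\delta)$ for $\delta\leq\delta_0$ fills in exactly the bookkeeping the paper leaves implicit.
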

Perhaps an example will be helpful at this juncture.
\begin{example}
  Let $m=s=p=2$ and $\mu=1$. Then $C_{\eg,c}^{(1)}(2,1,0,2,2)$ is a
  cyclic code whose generator polynomial has roots given by
\begin{eqnarray*}
  %Z&=&\{\alpha^h|0<\max_{0\leq l<2 }W_{2^2}(2^lh) \leq (m-\mu)(p^s-1)=(2-1)(2^2-1) \}\\
  Z&=&\{\alpha^h|0<\max_{0\leq l<2 }W_{2^2}(2^lh) \leq (m-\mu)(p^s-1)=3 \}\\
  &=&\{  \alpha^1, \alpha^2, \alpha^3, \alpha^4, \alpha^6, \alpha^8, \alpha^9, \alpha^{12} \}.
\end{eqnarray*}
As there are 4 consecutive roots and $|Z|=8$, it defines a $[15,7,\geq
5]$ code.  The roots of the generator polynomial of the dual code are
given by
\begin{eqnarray*}
Z^\perp&=& \{\alpha^h|0<\min_{0\leq l<2 }W_{2^2}(2^lh) \leq \mu(p^s-1)=(2^2-1) \}\\
&=&\{ \alpha^0, \alpha^1, \alpha^2, \alpha^4, \alpha^5, \alpha^8, \alpha^{10} \}.
\end{eqnarray*}
We see that $Z^\perp$ has two consecutive roots excluding $1$,
therefore the dual code is contained in a narrowsense BCH code with
design distance 3. Note that $p^{\mu s}-1=3$.  Thus we can choose
$C_x=\bch(3)$ and $C_z=C_{\eg,c}^{(1)}(2,1,0,2,2)$ and apply
Lemma~\ref{lm:css} to construct a $[[15,3,3/5]]_2$ asymmetric code.
\end{example}

We can also state the above construction as in \cite{ioffe07}, that is
given a primitive BCH code of design distance $\delta$, find an
LDPC code whose dual is contained in it. It must be pointed out that
in case of asymmetric codes derived from LDPC codes, the asymmetry
factor $d_x/d_z$ is not as indicative of the code performance as in
the case of bounded distance decoders.  For $m=p=2$, we can derive
explicit relations for the parameters of the codes.
\begin{corollary}\label{co:2dCyclic}
  Let $C=C_{\eg,c}^{(1)}(2,1,0,s,2)$ and $\delta =2t+1\leq 2^{s}-1$.
  Then there exists an
$$[[2^{2s}-1,2^{2s}-3^s -s(\delta-1),\delta/2^s+1]]_2 $$ asymmetric stabilizer code. 
\end{corollary}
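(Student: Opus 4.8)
The plan is to specialize the preceding theorem (the Asymmetric BCH--LDPC stabilizer code construction) to $m=p=2$, $\mu=1$, and then evaluate its three parameters in closed form. With these choices the ambient length is $n=p^{ms}-1=2^{2s}-1$ and the threshold on the design distance is $\delta_0=p^{\mu s}-1=2^s-1$, so the hypothesis $\delta=2t+1\le 2^s-1$ is exactly the condition $\delta\le\delta_0$ required there. Taking $C_z=C_{\eg,c}^{(1)}(2,1,0,s,2)$ and $C_x=\bch(\delta)$, the construction already produces an $[[n,k_x+k_z-n,d_x/d_z]]_2$ code with $d_x\ge\delta$ and $d_z\ge A_{\eg}(2,1,0,s,2)$, so it remains only to turn each quantity into the asserted form.

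First I would evaluate the flat count. Substituting $\mu_2=1$, $\mu_1=0$, $m=2$, $q=2^s$ into (\ref{eq:aEG}) gives $A_{\eg}(2,1,0,s,2)=(q^2-1)/(q-1)=q+1=2^s+1$ (the number of lines through a point in the affine plane of order $2^s$), so $d_z\ge 2^s+1$, which is the claimed $Z$-distance; the bound $d_x\ge\delta$ is immediate. For the dimension $k_z$, the value $k_z=\dim C_z=2^{2s}-3^s$ is the well-known dimension of the two-dimensional type-I cyclic EG-LDPC code over $\eg(2,2^s)$ (equivalently, the generator root set $Z$ has $|Z|=3^s-1$), see \cite{kou01,tang05}; the worked example with $s=2$, where $|Z|=8=3^2-1$, confirms this.

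Next I would compute $k_x$ by determining the degree of the generator polynomial of the narrow-sense primitive binary $\bch(\delta)$ code of length $2^{2s}-1$. Its roots are $\bigcup_{1\le j\le\delta-1}C_j$, where $C_j$ is the $2$-cyclotomic coset of $j$ modulo $2^{2s}-1$; since $C_{2j}=C_j$ this union equals $\bigcup_j C_j$ over the $t=(\delta-1)/2$ odd integers $j=1,3,\dots,\delta-2$. The place where $\delta\le 2^s-1$ is used is that every such $j$ satisfies $j\le 2^s-3<2^s$, so its $2s$-bit representation is supported in the low $s$ positions. Two short facts then finish it. (i) Each coset has full size $2s$: a smaller size $d$ would divide $2s$, hence satisfy $d\le s$, and make the $2s$-bit word $d$-periodic; but positions $s,\dots,2s-1$ vanish and already contain a full period block, so $d$-periodicity forces $j\equiv 0$, a contradiction. (ii) The $t$ cosets are pairwise distinct: if $C_a=C_b$ for distinct odd $a,b<2^s$, write $b\equiv 2^r a$ with $1\le r\le 2s-1$, a cyclic shift by $r$; the set bit in position $0$ of $a$ moves to position $r$, so $b<2^s$ forces $r\le s-1$, but then $b$ odd would require a set bit of $a$ in position $2s-r\ge s+1$, which lies in the vanishing top half — contradiction. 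Hence the generator has degree $2st=s(\delta-1)$ and $k_x=2^{2s}-1-s(\delta-1)$.

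Finally I would assemble the pieces: $k_x+k_z-n=\bigl(2^{2s}-1-s(\delta-1)\bigr)+\bigl(2^{2s}-3^s\bigr)-\bigl(2^{2s}-1\bigr)=2^{2s}-3^s-s(\delta-1)$, yielding the stated $[[\,2^{2s}-1,\ 2^{2s}-3^s-s(\delta-1),\ \delta/2^s+1\,]]_2$ parameters. I expect the genuine obstacle to be the dimension of $C_z$: proving $|Z|=3^s-1$ from scratch is a real combinatorial count over cyclic shifts of $2s$-bit words under the $2^s$-ary weight condition (it is what pins the exponent $3^s$), whereas the flat count and the BCH dimension are routine once the cyclotomic-coset bookkeeping above is carried out.
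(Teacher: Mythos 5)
Your proof is correct and follows essentially the same route as the paper: specialize the preceding BCH--LDPC theorem to $m=p=2$, $\mu=1$, take $k_z=\dim C_z=2^{2s}-3^s$ from the literature, and use that the narrow-sense primitive BCH code of length $2^{2s}-1$ with design distance $\delta\le 2^s-1$ has dimension $2^{2s}-1-s(\delta-1)$. The only difference is cosmetic: you derive the BCH dimension via an explicit (and correct) cyclotomic-coset count and the $Z$-distance via the flat-count formula~(\ref{eq:aEG}), where the paper simply cites \cite[Corollary~8]{macwilliams77} and \cite{lin04}.
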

\begin{proof}
  The parameters of $C$ are $[2^{2s}-1,2^{2s}-3^s,2^{s}+1]_2$, see
  \cite{lin04}. Since $C^\perp $ is contained in a BCH code of length
  $2^{2s}-1$ whose design distance $\delta\leq 2^s-1$, we can compute
  the dimension of the BCH code as $2^{2s}-1 -s(\delta-1)$, see
  \cite[Corollary~8]{macwilliams77}. By Lemma~\ref{lm:css} the quantum
  code has the dimension $2^{2s}-3^s -s(\delta-1)$.
\end{proof}

\begin{example}
  For $m=p=2$ and $s=4$ we can obtain a $[255,175,17]$ LDPC code. We
  can choose any BCH code with design distance $\delta \leq 2^4-1=15$
  to construct an asymmetric code. Table~\ref{tab:bchLdpcAqecc} lists possible codes.
%  Simulation results are also given for the highlighted codes.
\begin{table}[htb]
\begin{center}
\caption{Asymmetric BCH-LDPC stabilizer codes}
\label{tab:bchLdpcAqecc}
\begin{tabular}{c|c|c|l|c}
$s$&$\delta$  & Code &   Asymmetry &Rate\\
&  & $[[n,k,d_x/d_z]]_2$ &    $d_z/d_x$ &\\\hline
4&  15 & $[[255,119,15/17]]_2$   &   $\approx 1$&0.467\\
4&13  & $[[255,127,13/17]]_2 $&  $\approx 1.25$&0.498\\
4&11  & $[[255,135,11/17]]_2 $&  $\approx 1.5$&0.529\\
4&9  & $[[255,143,9/17]]_2 $&  $\approx 2$ &0.561\\
4&7  & $[[255,151,7/17]]_2 $&  $\approx 2.5$&0.592\\
4&5  & $[[255,159,5/17]]_2 $&  $\approx 3$&0.624\\
4&3  & $[[255,167,3/17]]_2 $&  $\approx 6$&0.655\\
\end{tabular}
\end{center}
\end{table}
\end{example}

\section{Performance Results}
We now study the  performance of the codes constructed
in the previous section. Due to space constraints the discussion will
be rather brief, but more details will be supplied in a forthcoming
paper. We assume that the overall probability of error in the channel is
given by $p$, while the individual probabilities of $X$, $Y$, and $Z$
errors are $p_x=p/(A+2)$, $p_y=p/(A+2)$ and $p_z=pA/(A+2)$
respectively. The exact performance would require us to simulate a
$4$-ary channel and also account for the fact that some errors can be
estimated modulo the stabilizer. However, we do not account for this
and in that sense these results provide an upper bound on the actual
error rates.  
The 4-ary channel can be modeled as two
binary symmetric channels -- one modeling the bit flip channel and the
other the phase flip channel. For exact performance, these two
channels should be dependent, however, a good approximation is to
model the channel as two independent BSCs with cross over
probabilities $p_x+p_y=2p/(A+2)$ and $p_y+p_z=p(A+1)/(A+2)$. In this
case the overall error rate in the quantum channel is the sum of the
error rates in the two BSCs. While this approach is going to slightly
overestimate the error rates, nonetheless it is useful and has been
used before \cite{mackay04}.  Since the $X$-channel uses a BCH code
and decoded using a bounded distance decoder, we can just compute
$P_e^x$ the $X$ error rate, in closed form. 
The error rate in the Z channel, $P_e^z$  is obtained through simulations. The overall
error rate is 
$$P_e=1-(1-P_e^x)(1-P_e^z)=P_e^x+P_e^z-P_e^xP_e^z\approx P_e^x+P_e^z.$$ 

The LDPC code was decoded using the
hard decision bit flipping algorithm given in \cite{kou01}. 
The maximum number of iterations for decoding is set to 50.
In Figure~\ref{fig:fg255d5} we see the performance of
$[[255,159,5/17]]$ as the channel asymmetry is varied from 1 to 100.
We can clearly see the improvement as the channel asymmetry increases.
\begin{figure}[htb]
\begin{center}
\includegraphics[width=9cm]{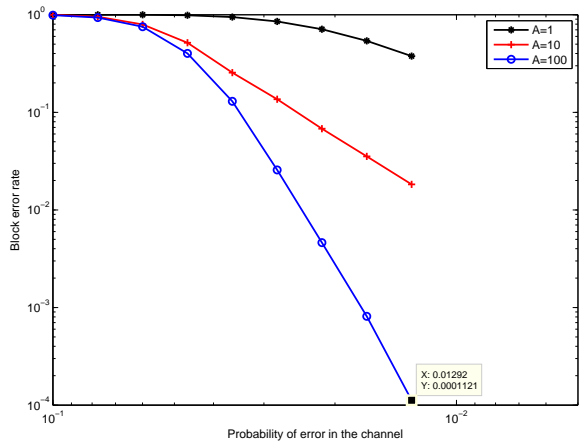}
\caption{Performance of  $[[255,159,5/17]]$ code for $A=1,10,100$}\label{fig:fg255d5}
\end{center}
\end{figure}

The question naturally raises how do these codes compare with the
codes proposed in \cite{ioffe07}. Strictly speaking both constructions
have regimes where they can perform better than the other. But it
appears that the algebraically constructed asymmetric codes have the
following benefits with respect to the randomly constructed ones of
\cite{ioffe07}.
\begin{itemize}
\item They give comparable performance and higher data rates with
  shorter lengths.
\item The benefits of classical algebraic LDPC codes are inherited, giving for instance
lower error floors compared to the random constructions.
\item The code construction is systematic.
\end{itemize}
Our codes also offer flexibility in the rate and performance of the
code because we can choose many possible BCH codes for a given finite
geometry LDPC code or vice versa.  The flip side however is that the
codes given here have higher complexity of decoding.

\section*{Acknowledgment} The authors would like to thank Marcus
Silva for many useful discussions and for proposing the combined
amplitude damping and dephasing channel which is our main motivating
example.

\end{document}